\pgfplotsset{compat=newest} 
\pgfplotsset{plot coordinates/math parser=false}
\pgfplotsset{compat=newest} 
\pgfplotsset{plot coordinates/math parser=false}
\pgfplotsset{compat=newest} 
\pgfplotsset{plot coordinates/math parser=false}
\newtheorem{theorem}{Theorem}
\def\CN{\mathcal{C}\mathcal{N}} 
\begin{document}
	\title{Energy Efficiency Maximization in the Uplink Delta-OMA Networks}
	\author{Ramin~Hashemi,~\IEEEmembership{Student Member,~IEEE,}
		Hamzeh Beyranvand,~\IEEEmembership{Member,~IEEE}, Mohammad Robat Mili, Ata Khalili, \IEEEmembership{Member,~IEEE}, 
		Hina Tabassum, \IEEEmembership{Senior Member,~IEEE}, and Derrick Wing Kwan Ng, \IEEEmembership{Fellow, IEEE}

		\vspace{-1mm}
		\thanks{
		
		R. Hashemi and H. Beyranvand are with the  Department of Electrical Engineering, Amirkabir University of Technology, 424 Hafez Avenue, Tehran 15914, Iran, e-mails: (raminhashemi@aut.ac.ir, beyranvand@aut.ac.ir).
		M R. Mili is with the Department of Telecommunications and information processing, Ghent University, Belgium, e-mail: (mohammad.robatmili@ieee.org).
		A. Khalili is with the Electronics Research Institute, Sharif University of Technology, Tehran, Iran (ata.khalili@ieee.org). 
		H. Tabassum is with the Lassonde School of Engineering at York University, Canada (e-mail:hina@eecs.yorku.ca).	
        D. W. K. Ng is with the School of Electrical Engineering and Telecommunications, University of New South Wales, Sydney, NSW 2052, Australia (e-mail: w.k.ng@unsw.edu.au).

	}}
	
	\maketitle
	\vspace{-3mm}
	\begin{abstract}
        Delta-orthogonal multiple access (D-OMA) has been recently investigated as a potential technique to enhance the spectral efficiency in the sixth-generation (6G) networks. D-OMA enables partial overlapping of the adjacent sub-channels that are assigned to different clusters of users served by non-orthogonal multiple access (NOMA), at the expense of additional interference. In this paper, we analyze the performance of D-OMA in the uplink and develop a multi-objective optimization framework to maximize the uplink energy efficiency (EE) in a \textcolor{black}{multi-access point (AP)} network enabled by D-OMA. Specifically, we optimize the  sub-channel and transmit power allocations of the users as well as the overlapping percentage of the spectrum between the adjacent sub-channels. The formulated problem is a mixed binary non-linear programming problem. Therefore, to address the challenge we first transform the problem into a single-objective problem using Tchebycheff method. Then, we apply the monotonic optimization (MO) to explore the hidden monotonicity of the objective function and constraints, and reformulate the problem into a standard MO in canonical form. The reformulated problem is then solved by applying the outer polyblock approximation method. Our numerical results show that D-OMA outperforms the conventional non-orthogonal multiple access (NOMA) and orthogonal frequency division multiple access (OFDMA)  when the adjacent sub-channel overlap and scheduling are optimized jointly.
	\end{abstract}
	\begin{IEEEkeywords}
		Delta-OMA, multi-objective optimization, power control, resource allocation.  
	\end{IEEEkeywords}
	\vspace{-6mm}
	\IEEEpeerreviewmaketitle
	
	\section{Introduction}
	\bstctlcite{IEEEexample:BSTcontrol}
Delta orthogonal multiple access (D-OMA) has been recently considered as a potential variant of non-orthogonal multiple access (NOMA) to enable massive multiple access and enhanced spectral efficiency in beyond-5G and6G networks~\cite{Al-Eryani2019,David2018,al2019delta}. D-OMA exploits partial overlapping of adjacent sub-channels that are assigned to different clusters of users served by NOMA and thereby enhance spectral efficiency. That is, NOMA is a special case of D-OMA when there is no overlapping of adjacent sub-channels. Clearly, the performance of D-OMA critically depends on the number of users in  a NOMA cluster, the fraction of overlapping spectrum, and sub-channel scheduling. It is noteworthy that while partial overlapping of adjacent sub-channels may enhance spectral efficiency, it can yield  additional interference that can result in significant performance loss.
Therefore, it is thus crucial to optimize the scheduling, NOMA cluster size, and the fraction of overlapping spectrum  efficiently.

To date, many research works have considered optimizing the performance of stand-alone NOMA or hybrid NOMA-OMA networks \cite{ali2016dynamic,Zeng2019,Shi2019,Sun2017,Wei2020}, and partial NOMA (P-NOMA) \cite{Kim2019,Ali2020}. The fundamental insights on the gains of NOMA over OMA are elaborated in \cite{Wei2020}. 
In hybrid NOMA-OMA, the transmit power, time, and sub-channel resources to the users are determined to efficiently exploit both NOMA and OMA. Compared to the traditional NOMA or OMA, hybrid NOMA has a variety of benefits, including higher spectral efficiency than OMA, less complexity in terms of successive interference cancellation (SIC) than NOMA, and reduced interference than NOMA.

In \cite{ali2016dynamic}, a user grouping and power allocation strategy based on sum-rate maximization was proposed for both the uplink (UL) and downlink NOMA. 
Also, in \cite{Zeng2019}, an energy-efficient power and resource block allocation framework is presented for the UL  of a hybrid NOMA network with the quality of service (QoS) constraints. In \cite{Shi2019}, the downlink energy efficiency (EE)  of the network is maximized by optimizing a user clustering and power control framework in a hybrid NOMA system. \textcolor{black}{On the other hand, P-NOMA partially overlaps the signals of the users by adjusting the extent of the overlap, thereby reducing the interference from other users. In particular, the P-NOMA setup is for two users by introducing two parameters which are positive real-valued numbers varying between zero and one thus, the spectrum is divided between overlap and the amount of non-overlap regions to each user \cite{Kim2019,Ali2020}. The motivation behind P-NOMA is to introduce higher flexibility into the system by having control over how much of the spectrum can overlap for two users.}

Different from the aforementioned variants of NOMA, in D-OMA, the spectrum overlapping is considered among two NOMA clusters operating on adjacent sub-channels within a given access point and the interference is controlled by optimizing either the fraction of overlapping percentage or reducing the cluster size.  Very recently, the authors in \cite{Al-Eryani2019} have shown preliminary results on the significance of D-OMA in the downlink, compared to  NOMA \cite{Tabassum2016}.

In this paper, we provide a comprehensive framework to analyze the performance  of D-OMA in the UL. Specifically, we develop a multi-objective optimization framework to maximize the uplink EE  of a multi-access point network enabled by D-OMA.~\textcolor{black}{In general, EE maximization problems are more important in the UL of wireless networks where extending the battery life of mobile users is of great concern.} We optimize the  sub-channel and transmit power allocations of the users as well as the overlapping percentage of the spectrum between the adjacent sub-channels. The formulated problem is a mixed binary non-linear programming problem, therefore, we first transform the problem into a single-objective problem using Tchebycheff method. Then, we apply the monotonic optimization (MO) framework to explore the hidden monotonicity of the objective function and constraints in order to reformulate the problem into a standard MO in canonical form. The re-formulated problem is then solved by the outer polyblock approximation method. The numerical results show that D-OMA method outperforms the conventional NOMA and OMA when the adjacent sub-channel overlap and scheduling is optimized jointly. {In addition, our numerical results depict  the effectiveness of D-OMA considering two cases: \textbf{(i)} when the overlapping percentages are optimized individually on each sub-channel denoted as per-sub-channel optimized delta (POD), and \textbf{(ii)}~when the overlapping parameter is fixed for all sub-channels denoted as non-POD (NPOD).}

    \vspace{-4mm}
	\section{System Model and Problem Formulation}
	\subsection{System Model}
	\textcolor{black}{Consider the UL of a multi-access point (AP) network shown in Fig. \ref{fig:SystemModel_Network} where the D-OMA method is leveraged to serve the users within $K$ APs.}
	\begin{figure*}[t]
		\centering
		\begin{subfigure}[b]{0.56\textwidth}  
			\centering
			\includegraphics[scale = 0.5]{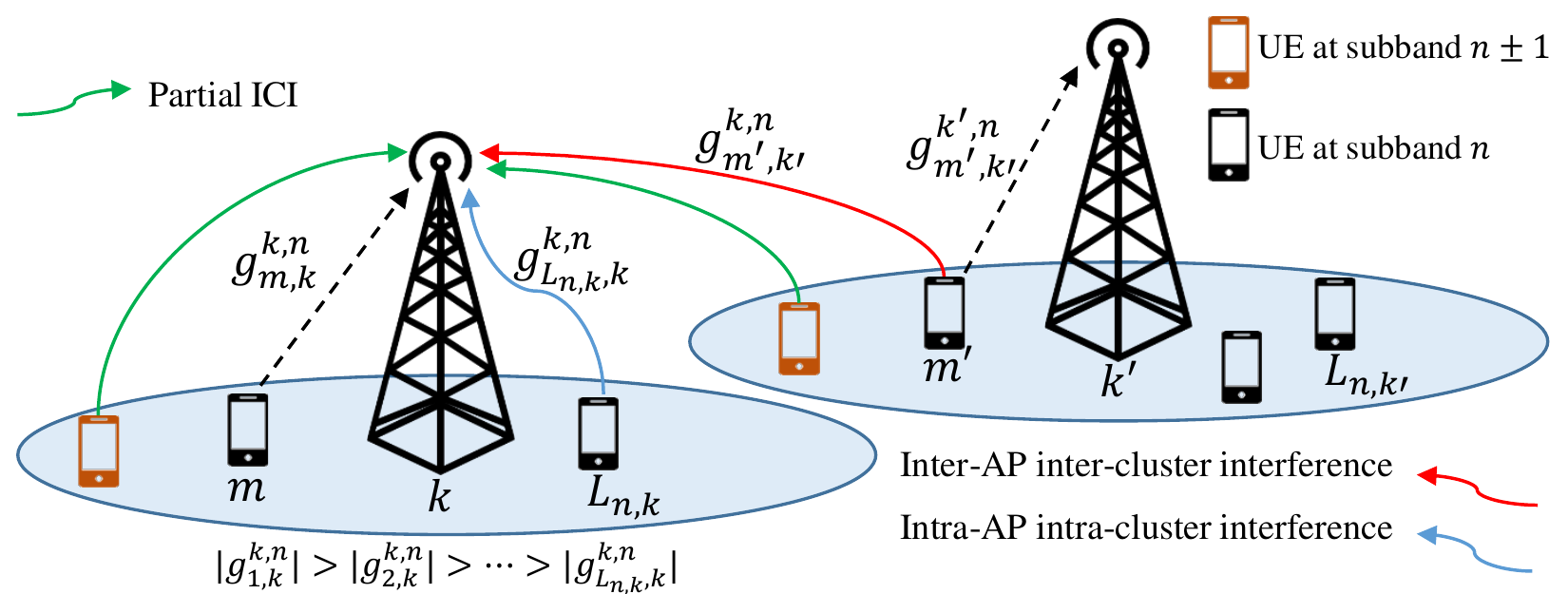}
			\caption{\textcolor{black}{Network topology and illustration of interference signals.}}
			\label{fig:SystemModel_Network}
		\end{subfigure}
		\begin{subfigure}[b]{0.43\textwidth}
			\centering    
			\includegraphics[width=2.8in]{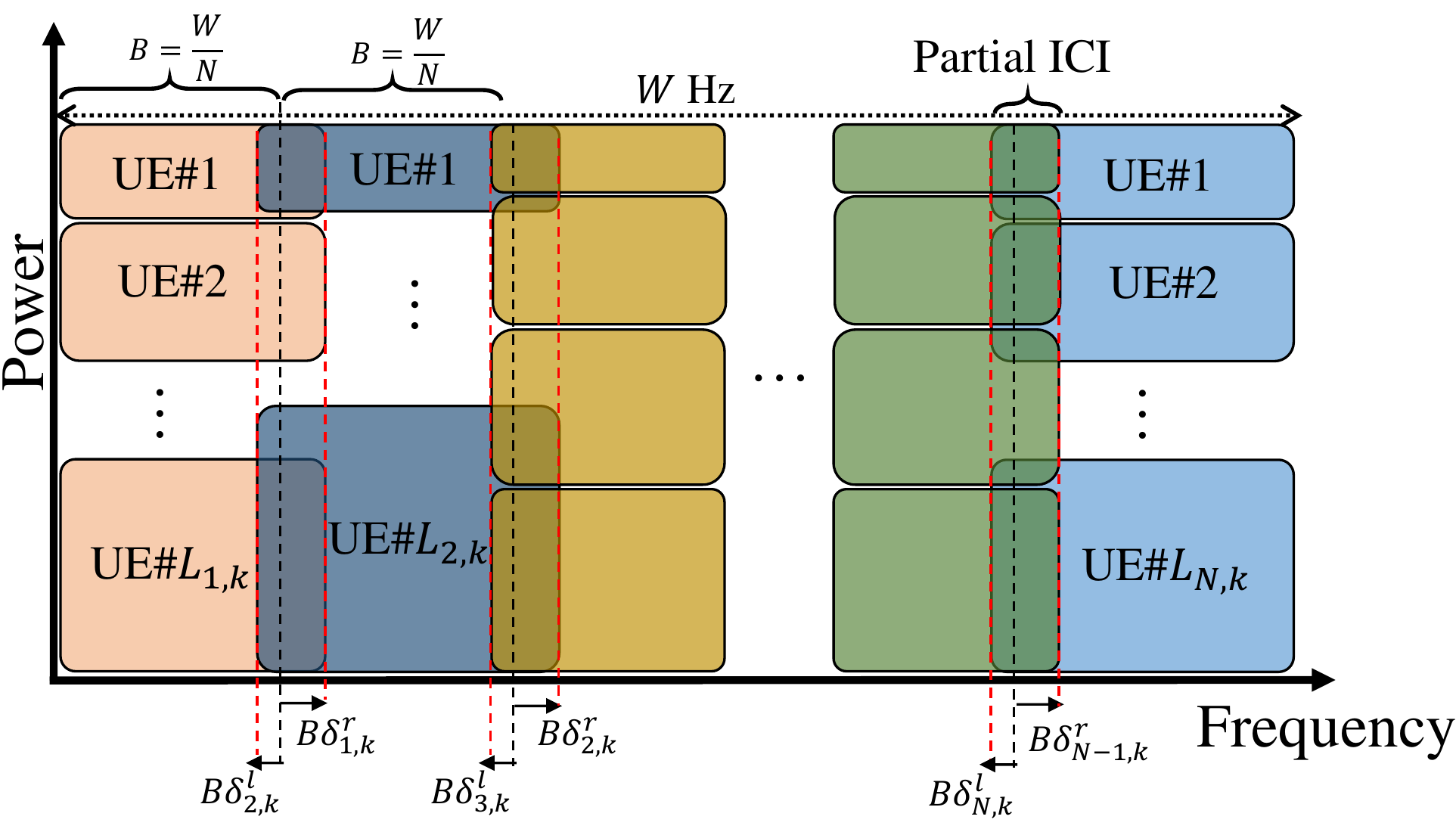}
			\caption{D-OMA and partial ICI illustration in AP $k$.}
			\label{fig:SystemModel}
		\end{subfigure}
		\caption{The UL D-OMA transmission.}
	\end{figure*}
	The principal methodology of D-OMA scheme is illustrated in Fig. \ref{fig:SystemModel} in which the overall system bandwidth ($W$ Hz) is divided into $N$ subbands and each subband is allocated to a sub-set of users (i.e., users in a specific NOMA cluster)\footnote{\textcolor{black}{D-OMA allows different NOMA clusters with  adjacent frequency bands to overlap by an amount of $\delta$ percent of their maximum allocated sub-channel. Consequently, the spectral efficiency achieved by massive in-band NOMA cluster is maintained by adding more clusters within the same allocated overall spectrum.}}.  Assume the number of users in subband $n$ at AP $k$ as $L_{n,k}$, and $\mathscr{U}_k=\{1,2,...,U_k\}$ \textcolor{black}{denotes} the set of users in the coverage \textcolor{black}{area} of AP $k$. Thus, we have $\sum_{n=1}^{N}L_{n,k}=U_k$. We indicate the set of subbands by $\mathscr{N}$ and the set of all APs by $\mathscr{K}$.

	In D-OMA method, an additional interference signal named as partial inter-cluster interference (ICI) from the adjacent sub-channels incur. \textcolor{black}{This interference will be observed from users in  the same AP as well as other APs.} {The adjacent subbands are interleaved from the right- and left-side by amount of $\text{B}\times\delta_{n,k}^r$ Hz and $\text{B}\times\delta_{n,k}^l$ Hz, respectively, where $0\leq \delta_{n,k}^r\leq 1$ and $0\leq \delta_{n,k}^l\leq 1$ denote the overlap percentage of subband $n$ for AP $k$, and $\text{B}$ is the bandwidth of each subband i.e. $\text{B}=\frac{W}{N}$ (see Fig. \ref{fig:SystemModel}). Note that when $\delta_{n,k}^r=  \delta_{n,k}^l=1$ the subbands are completely overlapped such that the amount of interference power is maximum. On the other hand, each subband's effective bandwidth denoted by $\text{B}_n=\text{B}(1+\delta_{n,k}^r+\delta_{n,k}^l)$ is expanded three times at the expense of additional  ICI. } Subsequently, the optimal values of $\delta_{n,k}^r$ and $\delta_{n,k}^l$ for $\forall n \in \mathscr{N}$ and $\forall k \in \mathscr{K}$ and cluster size should be determined efficiently to enhance the network sum rate (SR). Note that \textcolor{black}{$\delta_{n,k}^r=0$ and $\delta_{n,k}^l=0$} for $\forall n \in \mathscr{N}$ and $\forall k \in \mathscr{K}$ correspond to the conventional NOMA-OFDM.

	Let $p_{m,k}$ denote the transmitted data power from the \textcolor{black}{user equipment (UE)} $m$ to the AP $k$. To model the subband allocation, we define the following binary variable
	\begin{flalign}
		\rho_{m,k}^n = 
		\begin{cases}
			1, &\smallskip\text{if UE } m \text{ in AP } k \text{ is associated with subband } n, \\    
			0. &\smallskip \text{otherwise.}
		\end{cases}
		\nonumber 
	\end{flalign}
	Therefore, the received signal at AP $k$ in subband $n$ will be as given in 
	\begin{flalign}
		y_{k,m}^n = & \underset{\text{Desired signal for UE } m}{\underbrace{\rho_{m,k}^n s_{m,k}\displaystyle\sqrt{p_{m,k}}g_{m,k}^{k,n}}} + \underset{\text{\textcolor{black}{Intra-AP  ICI}}} {\underbrace{\sum_{\substack{m'=1,\\m'\neq m}}^{U_k}\rho_{m',k}^{n} s_{m',k}\displaystyle\sqrt{p_{m',k}}g_{m',k}^{k,n}}}+
		\underset{\text{\textcolor{black}{Inter-AP ICI}}} {\underbrace{\sum_{\substack{k'=1,\\k'\neq k}}^{K}\sum_{\substack{m'=1}}^{U_{k'}}\rho_{m',k'}^{n}s_{m',k'}\displaystyle\sqrt{p_{m',k'}}g_{m',k'}^{k,n}}}  \label{received_signal} \\ 
		& +
		\underset{\text{\textcolor{black}{Partial ICI}}} {\underbrace{\sum_{\substack{k'=1}}^{K}\sum_{\substack{m'=1}}^{U_{k'}}\Big((\sqrt{\delta_{n+1,k'}^{l}}+\sqrt{\delta_{n,k'}^{r}})g_{m',k'}^{k,n+1}\rho_{m',k'}^{n+ 1}+(\sqrt{\delta_{n,k'}^{l}}+\sqrt{\delta_{n-1,k'}^{r}})g_{m',k'}^{k,n-1}\rho_{m',k'}^{n- 1}\Big)s_{m',k'}\displaystyle\sqrt{p_{m',k'}}}} \nonumber \\ & + \sqrt{1+\delta_{n,k}^{l}+\delta_{n,k}^{r}}w^n_k, \nonumber
	\end{flalign}
	\textcolor{black}{ where $w^n_k\sim\CN(0,\sigma^2)$ is the additive complex Gaussian white noise for a given bandwidth B, in which the noise power is increased due to subband expansion to $(1+\delta_{n,k}^{l}+\delta_{n,k}^{r})\sigma^2$}, $s_{m,k}$ represents the transmitted symbol from UE $m$ to the AP $k$ with $\mathbb{E}[|s_{m,k}|^2]=1$, $\delta_{n,k}^r$ and $\delta_{n,k}^l$ are the introduced parameters to control the partial ICI at the subband $n$ in AP $k$. \textcolor{black}{It is worth noting that the ICI  is caused by using the same subband by other nearby clusters. The users within a certain cluster will suffer from ICI resulting from use of the same subband by other nearby clusters. The size of a NOMA cluster is considered a design parameter to reach trade-offs between different factors, namely, the data-rate necessities of the users; the total power budget per NOMA cluster; the complexity level at the NOMA receivers; and the NOMA user immunity to ICI-based and SIC-based error propagation.  On the other hand, SIC allows successive decoding of users' signals.}
	
	\vspace{-3mm}
	\subsection{Network Rate and Energy Efficiency}
	\textcolor{black}{It is important to note that there are four types of partial ICI on a given subband $n$ in AP $k$, as illustrated in Fig. \ref{fig:SystemModel}. For instance, the partial ICI on the left-hand side is due to the expansion of  subband $n-1$ towards the right-hand side which is controlled by $\delta^{r}_{n-1,k}$ as well as the expansion of subband $n$ to the left-hand side where it is controlled by $\delta^{l}_{n,k}$. The same inference applies to the other two partial ICI signals on the right-hand side of subband $n$.} 
	
	Note that, $\delta^{r}_{N,k}$ and $\delta^{l}_{1,k}$ do not exist and therefore their value is zero. $g_{m,k'}^{k,n}$ indicates the channel \textcolor{black}{power gain} between UE $m$ in AP $k'$ and AP $k$ at subband $n$ which is denoted as $g_{m,k'}^{k,n}=h_{m,k'}^{k,n}\sqrt{\beta_{m,k'}^{k,n}}$, where $h_{m,k'}^{k,n}$ is the small-scale fading coefficient which follows circularly-symmetric zero-mean complex normal distribution with unit variance as $h_{m,k'}^{k,n}\sim\CN(0,1)$ and $\beta_{m,k'}^{k,n}$ represents the large-scale fading and path loss. The channel gains are being sorted at APs, i.e., $|g_{1,k}^{k,n}|\geq |g_{2,k}^{k,n}|\geq ...\geq |g_{L_{n,k},k}^{k,n}|$, $\forall k \in \mathscr{K}$, $\forall n \in \mathscr{N}$, to perform SIC, in order to extract the desired signal of all UEs. Therefore, the achievable rate of UE $m$ associated with AP $k$ at subband $n$ is given by
	\begin{flalign}
		R_{m,k}^{n} = \text{B}_n\log_2\bigg(1+\frac{\displaystyle p_{m,k}|g_{m,k}^{k,n}|^2}{ \text{I}_\text{\textcolor{black}{IntraICI}}^{m,n,k} + \text{I}_\text{\textcolor{black}{InterICI}}^{m,n,k}+ \text{I}_\text{\textcolor{black}{PartialICI}}^{m,n,k}+\tilde{\sigma}^2_{n,k}}\bigg),
		\label{achievable_rate_conventional_}
	\end{flalign}
	where $\text{B}=\frac{W}{N}$, $\text{B}_n=\text{B}(1+\delta_{n,k}^r+\delta_{n,k}^l)$,  $\tilde{\sigma}^2_{n,k}=\sigma^2(1+{\delta_{n,k}^{l}}+{\delta_{n,k}^{r}})$, and $\text{I}_\text{\textcolor{black}{IntraICI}}^{m,n,k}$, $\text{I}_\text{\textcolor{black}{InterICI}}^{m,n,k}$ and $\text{I}_\text{\textcolor{black}{PartialICI}}^{m,n,k}$ are defined as
	\begin{flalign}
		\text{I}_\text{\textcolor{black}{IntraICI}}^{m,n,k}=& \displaystyle\sum_{m'=m+1}^{U_k}\rho_{m',k}^n p_{m',k}|g_{m',k}^{k,n}|^2,
		 \nonumber\\
		\text{I}_\text{\textcolor{black}{InterICI}}^{m,n,k}= & \sum_{\substack{k'=1, k'\neq k}}^{K}\sum_{\substack{m'=1}}^{U_{k'}}\rho_{m',k'}^{n}p_{m',k'}|g^{k,n}_{m',k'}|^2, \label{IntraInt} \\ 
		\text{I}_\text{\textcolor{black}{PartialICI}}^{m,n,k} = & \sum_{\substack{k'=1}}^{K}\sum_{\substack{m'=1}}^{U_{k'}}\Big((\sqrt{\delta_{n+1,k'}^{l}}+\sqrt{\delta_{n,k'}^{r}})^2|g_{m',k'}^{k,n+1}|^2\rho_{m',k'}^{n+ 1}  + (\sqrt{\delta_{n,k'}^{l}}+\sqrt{\delta_{n-1,k'}^{r}})^2|g_{m',k'}^{k,n-1}|^2\rho_{m',k'}^{n- 1}\Big)\displaystyle{p_{m',k'}}. \nonumber
	\end{flalign}
	\textcolor{black}{where $\text{I}_\text{\textcolor{black}{IntraICI}}^{m,n,k}$ and $\text{I}_\text{\textcolor{black}{InterICI}}^{m,n,k}$ are the power of the \textcolor{black}{\textit{intra-AP  intra-cluster} interference and \textit{inter-AP inter-cluster} interference terms} in \eqref{received_signal}, respectively. Additionally, $\text{I}_\text{\textcolor{black}{PartialICI}}^{m,n,k}$ refer to the power of partial ICI components as a function of $\delta_{n,k}^r$ and $\delta_{n,k}^l$.} For simplicity, let us define $\text{I}_\text{\textcolor{black}{Total}}^{m,n,k}=\text{I}_\text{\textcolor{black}{IntraICI}}^{m,n,k} + \text{I}_\text{\textcolor{black}{InterICI}}^{m,n,k}+ \text{I}_\text{\textcolor{black}{PartialICI}}^{m,n,k}+\tilde{\sigma}^2_{n,k}$ as the total interference plus noise power. The total SR of network in bps/Hz is given by $\text{SR} = \displaystyle \sum\limits_{n=1}^{N}\sum\limits_{k=1}^{K}\sum\limits_{m=1}^{U_k} \rho_{m,k}^n R_{m,k}^n$	. It is inferred that by appropriately choosing \textcolor{black}{$\delta_{n,k}^r$ and $\delta_{n,k}^l$} for $\forall n \in \mathscr{N}, \forall k \in \mathscr{K}$, the SR will be increased as well. To the best of our knowledge, the joint optimization of SR and sum power (SP) in terms of finding optimal value of $\boldsymbol{\delta}$, $\textbf{p}$, and $\boldsymbol{\rho}$ have not been investigated before where we will discuss it in  \textcolor{black}{the} next subsequent sections. Note that $\boldsymbol{\rho}$, $\textbf{p}$ and $\boldsymbol{\delta}$ \textcolor{black}{are the vector representation of} the variables $p_{k,m}$ for $\forall (m,k)\in\mathscr{U}_k\times\mathscr{K}$ and $\rho_{m,k}^n$, $\forall (m,k,n)\in\mathscr{U}_k\times\mathscr{K}\times\mathscr{N}$ and \textcolor{black}{$\delta_{n,k}^r$ and $\delta_{n,k}^l$ for} $\forall (n,k)\in\mathscr{N}$\textbackslash$N \times \mathscr{K}$, respectively. 
	
	Our aim is to optimize the total energy efficiency (EE) which is  $\text{EE} = \frac{\text{SR}}{\text{SP}+ \text{CP}}$ where $\text{CP}=\sum\limits_{k=1}^{K}\sum\limits_{m=1}^{U_k}p^{\text{circuit}}_{m,k}$ denotes the total circuit power consumption with $p^{\text{circuit}}_{m,k}$ for UE $m$ in AP $k$ and $\text{SP}=\sum\limits_{k=1}^{K}\sum\limits_{m=1}^{U_k}p_{m,k}$ is the total transmitted data power \textcolor{black}{where $p_{m,k}$ is the transmission power for UE $m$ in AP $k$}. It can be easily proved that a problem with the objective of maximizing EE (which is a ratio of total rate to the power consumption) is equivalent to a multi-objective optimization which the objectives are maximizing total rate (the nominator of the EE) and minimizing total power consumption (the denominator of the EE) \cite{Zarandi2020}. Therefore, invoking this property, we formulate a multi-objective optimization problem in the next section.

    \vspace{-5mm}
	\subsection{Multi-objective Problem Formulation}
	In this section, we formulate an optimization framework  where the objective is to jointly maximize the SR and minimize the total transmitted data power, i.e. SP. The problem formulation is described as follows
	\begin{subequations}
	\begin{flalign} \label{P1obj}
		\textbf{P1 }&
		\begin{cases}
			\displaystyle
			\max_{\boldsymbol{\rho},\textbf{p},\boldsymbol{\delta}} \enskip \text{SR} = \displaystyle \sum\limits_{n=1}^{N}\sum\limits_{k=1}^{K}\sum\limits_{m=1}^{U_k} \rho_{m,k}^n R_{m,k}^n \\
			\displaystyle
			\min_{\boldsymbol{\rho},\textbf{p},\boldsymbol{\delta}} \enskip
			\text{SP} = \sum\limits_{k=1}^{K}\sum\limits_{m=1}^{U_k}p_{m,k}
		\end{cases}
		\\ 
		\text{\textbf{s.t.}} \quad 
		& \text{C1: } 
		\textcolor{black}{\sum\limits_{n=1}^{N} \rho_{m,k}^{n} R_{m,k}^{n}\geq R_{m}^{\text{QoS}}, \smallskip \forall k\in\mathscr{K}, \forall m \in \mathscr{U}_k,}  \\ 
		& \textcolor{black}{\text{C2: } 0 \leq \delta_{n,k}^{l} \leq 1, \quad \forall n \in \mathscr{N}, n \neq 1, \forall k \in \mathscr{K},} \label{delta_1} \\
		& \textcolor{black}{\text{C3: } 0 \leq \delta_{n,k}^{r} \leq 1, \quad \forall n \in \mathscr{N}, n \neq N, \forall k \in \mathscr{K},} \label{delta_2} \\
		& \text{C4: } \sum\limits_{m=1}^{U_k}\rho_{m,k}^n \leq  L_{n,k}, \quad \forall n \in  \mathscr{N},\forall k \in \mathscr{K}, \\
		&  \text{C5: }
		\sum\limits_{n=1}^{N}\rho_{m,k}^n \leq 1, \quad \forall m \in \mathscr{U}_k, \forall k \in \mathscr{K}, \\ 
		& \text{C6: } p_{m,k} \leq P_{m}^{\text{max}},  \quad \forall m \in \mathscr{U}_k, \forall k \in \mathscr{K}, \label{Power_UL} \\ 
		& \text{C7: }
		\rho_{m,k}^n \in \{0,1\}, \quad \forall m \in \mathscr{U}_k, \forall k \in \mathscr{K}, \forall n \in \mathscr{N}.
	\end{flalign} 
	\end{subequations}
	where C1 denotes the QoS constraint guaranteeing minimum rate of each user, C2 and C3 are the amount of allowed inter-cluster partial ICI overlapping percentage. The constraint C4 shows that the total number of users in subband $n$ at AP $k$ is $L_{n,k}$ and C5 states that each UE must be allocated to only one subband, C6 is the maximum transmission power constraint for UE $m$. Problem \textbf{P1} is a type of mixed \textcolor{black}{integer} nonlinear and non-convex optimization which is intractable to solve. In order to find the Pareto-optimal solutions for \textbf{P1}, we inspire the Tchebycheff approach \cite{Ata_TWC,Marler2004} \textcolor{black}{which} is investigated in \textcolor{black}{the} next section comprehensively.
	\vspace{-3mm}
	\section{Problem Transformation and Solution}
	\subsection{Problem Transformation}
	\textcolor{black}{Inspired from weighted max-min formulation for multi-objective optimizations, to convert \textbf{P1} into a single objective optimization problem we employ Tchebycheff method as it provides complete Pareto-optimal solutions \cite{Marler2004}. Henceforth, by applying this method, the  multi-objective optimization problem \textbf{P1} is be transformed as follows:}
	\begin{flalign} \label{P2obj}
		\textbf{P2 }&
		\min_{\boldsymbol{\rho},\textbf{p},\boldsymbol{\delta},\lambda} \enskip \lambda
		\\
		\text{\textbf{s.t.}} \quad \nonumber
		&\textcolor{black}{\tilde{\text{C}}\text{1: }\omega(U_1^* - U_1(\boldsymbol{\rho},\textbf{p},\boldsymbol{\delta})) \leq \lambda,} \\\nonumber
		&\textcolor{black}{\tilde{\text{C}}\text{2: }(1-\omega)\left(U_2(\boldsymbol{\rho},\textbf{p},\boldsymbol{\delta}) - U_2^*\right) \leq \lambda, }\text{ C1--C7},
	\end{flalign} 
	\textcolor{black}{where $\lambda$ is an auxiliary parameter, $\omega$ is the non-negative weight typically set by a decision maker, and $U_1(\boldsymbol{\rho},\textbf{p},\boldsymbol{\delta}) = \text{SR}$ where SR is the sum-rate,  $U_2(\boldsymbol{\rho},\textbf{p},\boldsymbol{\delta}) =  \text{SP}$ in which SP is the sum-power defined earlier.}
	Furthermore, $U_i^*$ is the utopia point \cite{Marler2004},	\textcolor{black}{for $i\in \{1, 2\}$ obtained by solving the single-objective problems.}

	The problem \textbf{P2} is still non-convex due to binary \textcolor{black}{constraint C7 and non-convex constraint C1. To address this issue, first, we relax the binary constraint C7 then, add a regulation term;} next, we combine $p_{m,k}$ and $\rho_{m,k}^n$ and introduce new \textcolor{black}{constraint} $\tilde{p}_{m,k}^n \leq \rho_{m,k}^nP^{\text{max}}_m$ to be replaced with \textcolor{black}{C6}, where it is equal to zero when $\rho_{m,k}^n=0$. Therefore, the problem formulation will be
	\setlength{\abovedisplayskip}{3pt}
    \setlength{\belowdisplayskip}{\abovedisplayskip}
    \setlength{\abovedisplayshortskip}{0pt}
    \setlength{\belowdisplayshortskip}{3pt}
	\begin{flalign} \label{P3obj}
		\textbf{P3 }&
		\max_{\boldsymbol{\rho},\tilde{\textbf{p}},\boldsymbol{\delta},\lambda} \enskip -\lambda + \underset{\textcolor{black}{\text{Regulation term}}}{\underbrace{\sum\limits_{n=1}^{N}\sum\limits_{k=1}^{K}\sum\limits_{m=1}^{U_k} \alpha\big(({\rho_{m,k}^n})^2-\rho_{m,k}^n\big) }}   \\
		\text{\textbf{s.t.}} \quad 
		& \tilde{\text{C}}\text{1, } \tilde{\text{C}}\text{2, C1--C5}, \nonumber \\
		& \tilde{\text{C}}\text{6: }
		\tilde{p}_{m,k}^n \leq \rho_{m,k}^nP^{\text{max}}_m, \quad \forall k \in \mathscr{K}, \forall m \in \mathscr{U}_k, \forall n \in \mathscr{N}, \nonumber\\
		& \tilde{\text{C}}\text{7: }
		\rho_{m,k}^n \in [0,1], \quad \forall k \in \mathscr{K}, \forall m \in \mathscr{U}_k, \forall n \in \mathscr{N}, \nonumber
	\end{flalign} 
	where a regulation term is added to the objective function with parameter $\alpha\gg1$ that forces the relaxed variables $\rho_{m,k}^n$ to be approximately close to either \textcolor{black}{zero or one. \textcolor{black}{For a sufficiently large value of $\alpha$, optimization problem \textbf{P3} is equivalent to \textbf{P2} as both problems attain the same optimal values \cite{Ata_TWC}.} In other words, the parameter $\alpha$ controls the importance of the regulation term penalty in the objective function, however its value during the simulations shall be chosen properly, since for an exceedingly large $\alpha$, the regulation term would dominate the objective function. Note that $\tilde{\textbf{p}}$ is the vector representation of  the new variables} $\tilde{p}_{m,k}^n$ for $\forall (m,k,n)\in\mathscr{U}_k\times\mathscr{K}\times\mathscr{N}$.
    
    \vspace{-3mm}
	\subsection{Solution Approach (Monotonic Optimization)}
	In this section, our aim is to convert \textcolor{black}{problem \textbf{P3}} into the caconical form of the monotonic optimization framework \cite{Tuy2000,Tuy2005,Zappone2017}. It is observed that the objective and the constraints in \textbf{P3} are not strictly increasing, however they can be written in terms of difference of increasing functions (DIF). \textcolor{black}{Thus, the following theorem is expressed for optimization problems incorporating DIFs.
	\begin{theorem}
		The following optimization problem 
		\begin{flalign}
			\boldsymbol{\mathscr{P}}\text{: }  \max_{\textbf{x}} \enskip f(\textbf{x})-g(\textbf{x})  \nonumber \\ 
			\textbf{\text{s.t.}} \quad \textbf{x} \in \boldsymbol{\Xi}\cap\boldsymbol{\Xi}_c, \nonumber
		\end{flalign}
		where  $\boldsymbol{\Xi}$,  $\boldsymbol{\Xi}_c$ denote the normal and co-normal sets, respectively and $f(.)$ and $g(.)$ are both increasing functions in $[\textbf{0},\textbf{b}]$  is a class of monotonic optimization problem.
		\label{theo1}
	\end{theorem}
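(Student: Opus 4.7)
The plan is to recast $\boldsymbol{\mathscr{P}}$ into the canonical monotonic-optimization form, i.e.\ maximizing a single increasing function over the intersection of a normal set and a co-normal set. The fundamental obstacle is that the objective $f(\textbf{x})-g(\textbf{x})$ is not itself monotone, being a difference of increasing functions; the main technical device will be an auxiliary-variable lift that absorbs the decreasing term $-g(\textbf{x})$ into the feasible set so that what remains is genuinely increasing.

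Concretely, I would first set $M:=g(\textbf{b})$; because $g$ is increasing on $[\textbf{0},\textbf{b}]$, $M$ is its maximum and $0\le M-g(\textbf{x})\le M$ for every feasible $\textbf{x}$. Introducing a scalar slack $t\in[0,M]$, I would then study the lifted problem
\begin{align*}
\max_{(\textbf{x},t)} \quad & f(\textbf{x})+t-M \\
\text{s.t.}\quad & g(\textbf{x})+t\le M, \\
& \textbf{x}\in \boldsymbol{\Xi}\cap\boldsymbol{\Xi}_c,\ (\textbf{x},t)\in[\textbf{0},\textbf{b}]\times[0,M].
\end{align*}
Since the lifted objective is strictly increasing in $t$, at any optimum the slack constraint is active, i.e.\ $t^{\star}=M-g(\textbf{x}^{\star})$, and the lifted optimal value equals $f(\textbf{x}^{\star})-g(\textbf{x}^{\star})$, so the lifted problem is equivalent to $\boldsymbol{\mathscr{P}}$.

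Next, I would verify the canonical structure. Dropping the harmless additive constant $-M$, the objective $\phi(\textbf{x},t):=f(\textbf{x})+t$ is increasing in $(\textbf{x},t)$ because $f$ is increasing and $t$ enters with a positive coefficient. The set $\mathcal{G}:=\{(\textbf{x},t)\in[\textbf{0},\textbf{b}]\times[0,M]: g(\textbf{x})+t\le M\}\cap(\boldsymbol{\Xi}\times[0,M])$ is normal, because its first factor is the sub-level set of the increasing map $(\textbf{x},t)\mapsto g(\textbf{x})+t$ and hence downward-closed, the cylinder $\boldsymbol{\Xi}\times[0,M]$ inherits normality from $\boldsymbol{\Xi}$, and normality is preserved under intersection. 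Dually, $\mathcal{H}:=\boldsymbol{\Xi}_c\times[0,M]$ inherits co-normality from $\boldsymbol{\Xi}_c$. Therefore the lifted problem takes the form $\max\,\phi(\textbf{y})$ over $\textbf{y}\in\mathcal{G}\cap\mathcal{H}$, which is exactly the canonical monotonic optimization form.

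The delicate step will be arguing rigorously that the lifted optimum coincides with the original one, so that no optimal solution is lost in the passage to canonical form; this hinges on the strict monotonicity of $\phi$ in the slack $t$, which forces the slack constraint to tighten at every optimum. The remaining ingredients — preservation of normality under Cartesian products with a box and under intersection, and the analogous preservation of co-normality — are routine consequences of the definitions of normal and co-normal sets in the positive orthant, and hence require only brief verification rather than new argument.
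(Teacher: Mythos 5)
The paper does not actually prove this theorem; its ``proof'' is only a pointer to \cite{Zappone2017}. Your argument reconstructs exactly the standard slack-variable reformulation used in that reference: absorb the decreasing term $-g(\textbf{x})$ into an auxiliary variable $t$ through the constraint $g(\textbf{x})+t\le g(\textbf{b})$, note that strict monotonicity of the lifted objective in $t$ forces this constraint to be active at any optimum (so the lifted problem is equivalent to $\boldsymbol{\mathscr{P}}$), and verify that the lifted feasible region is still an intersection of a normal set and a co-normal set. This is correct and is essentially the canonical argument, so there is no gap of substance. One small point worth tightening: with $t\in[0,M]$ and $M=g(\textbf{b})$, the identity $t^{\star}=M-g(\textbf{x}^{\star})$ at the optimum requires $g(\textbf{0})\ge 0$ (otherwise $M-g(\textbf{x})$ can exceed $M$ and the box bound on $t$ binds before the slack constraint does, so the lifted value degenerates to $f(\textbf{x})$); the clean fix is to let $t$ range over $[0,\,g(\textbf{b})-g(\textbf{0})]$ and shift the objective by $g(\textbf{b})$ as you do, which changes nothing else in the argument.
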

	\begin{proof}
	Please refer to \cite{Zappone2017}.
	\end{proof}}
	
	\textcolor{black}{Because the the objective and the constraints in \textbf{P3} do not explicitly indicate monotonicity, our aim is to reformulate \textbf{P3} to explore some hidden monotonicity.} To do so, we first define the following functions
	\begin{flalign}
		\nonumber
		& q_0^+(\boldsymbol{\rho}) =  \sum\limits_{n=1}^{N}\sum\limits_{k=1}^{K}\sum\limits_{m=1}^{U_k} \alpha{(\rho_{m,k}^n})^2,  
		q_0^-(\boldsymbol{\rho},\lambda) = \sum\limits_{n=1}^{N}\sum\limits_{k=1}^{K}\sum\limits_{m=1}^{U_k} \alpha{\rho_{m,k}^n} + \lambda, \\
	    \nonumber
		& q^+_1(\tilde{\textbf{p}},\boldsymbol{\delta},\lambda)  = \omega   \sum\limits_{n=1}^{N}\sum\limits_{k=1}^{K}\sum\limits_{m=1}^{U_k} \text{B}_n\log_2\Big(\displaystyle \tilde{p}_{m,k}^n|g_{m,k}^{k,n}|^2 + \text{I}_\text{\textcolor{black}{Total}}^{m,n,k}\Big) + \lambda,\\ \nonumber
		& q^-_1(\tilde{\textbf{p}},\boldsymbol{\delta})=  \omega   \sum\limits_{n=1}^{N}\sum\limits_{k=1}^{K}\sum\limits_{m=1}^{U_k} \text{B}_n\log_2\Big(\displaystyle  \text{I}_\text{\textcolor{black}{Total}}^{m,n,k}\Big) + \omega U_1^*,\\
		& q^+_2(\tilde{\textbf{p}},\boldsymbol{\delta})=   \min_{m,n,k} \Bigg\{ \text{B}_n\log_2\Big(\displaystyle \tilde{p}_{m,k}^n|g_{m,k}^{k,n}|^2+\displaystyle  \text{I}_\text{\textcolor{black}{Total}}^{m,n,k}\Big)  +\sum\limits_{\substack{n'=1,\\n'\neq n}}^{N}\sum\limits_{\substack{k'=1,\\k'\neq k}}^{K}\sum\limits_{\substack{m'=1,\\m'\neq m}}^{U_{k'}} \bigg(\text{B}_{n'}\log_2\Big(\displaystyle \displaystyle  \text{I}_\text{\textcolor{black}{Total}}^{m',n',k'}\Big) + R_{m'}^{\text{QoS}} \bigg)\Bigg\},  \nonumber\\ \nonumber
		& q_2^-(\tilde{\textbf{\text{p}}},\boldsymbol{\delta}) =  \sum\limits_{{n=1}}^{N}\sum\limits_{{k=1}}^{K}\sum\limits_{{m=1}}^{U_k} \bigg(\text{B}_{n}\log_2\Big(\displaystyle \displaystyle  \text{I}_\text{\textcolor{black}{Total}}^{m,n,k}\Big) + R_{m}^{\text{QoS}} \bigg), \\
		\nonumber
		& q_3^+(\lambda) =  \lambda + (1-\omega)U_2^*, \enskip  q_3^-(\tilde{\textbf{\text{p}}}) =  (1-\omega)\sum\limits_{k=1}^{K}\sum\limits_{m=1}^{U_k}\tilde{p}_{m,k}^n, \\ \nonumber
		& q_4^+(\tilde{\textbf{\text{p}}},\boldsymbol{\rho})=  \min_{m,n,k} \bigg\{\rho_{m,k}^nP^{\text{max}}_m+\sum\limits_{\substack{k'=1,\\k'\neq k}}^{K}\sum\limits_{\substack{m=1,\\m'\neq n}}^{U_{k'}}\tilde{p}_{m',k'}^{n'}\bigg\}, \\
		& q_4^-(\tilde{\textbf{\text{p}}})=  \sum\limits_{k=1}^{K}\sum\limits_{m=1}^{U_k}\tilde{p}_{m,k}^n,\nonumber
	\end{flalign}
	where we observe that $q_0^{\pm}(.), q_1^{\pm}(.), q_2^{\pm}(.), q_3^{\pm}(.), q_4^{\pm}(.)$ are increasing functions. \textcolor{black}{Now, the constraints can be written in the form of difference of increasing functions.} To do so, we define $\tilde{\textbf{p}}_{\text{max}}$ as the tensor containing maximum transmit powers (i.e. $P_m^{\text{max}}$), $ \boldsymbol{\rho}_{\text{max}}=\textbf{1}$ for $\forall k \in \mathscr{K}, m \in \mathscr{U}_k$ and $\forall n \in \mathscr{N}$ and $\boldsymbol{\delta}_{\text{max}}=\textbf{1}, \forall k \in \mathscr{K}, n \in \mathscr{N}$; moreover we set $\lambda_{\text{max}} = \Lambda$. \textcolor{black}{It can be inferred that the problem \textbf{P3} is an optimization problem where DIFs are used in the objective as well as constraints. Thus, we can  convert the problem into an MO by employing Theorem \ref{theo1}. } In order to proceed, we define the auxiliary variables	$t=q_0^-(\boldsymbol{\rho}_{\text{max}},\Lambda)-q_0^-(\boldsymbol{\rho},\lambda)$, $w =  q^-_1(\tilde{\textbf{p}}_{\text{max}},\boldsymbol{\delta}_{\text{max}}) - q^-_1(\tilde{\textbf{p}},\boldsymbol{\delta})$, $l=q^-_2(\tilde{\textbf{p}}_{\text{max}}$,  $\boldsymbol{\delta}_{\text{max}}) - q^-_2(\tilde{\textbf{p}},\boldsymbol{\delta})$, $u=q_3^-(\tilde{\textbf{\text{p}}}_{\text{max}}) -q_3^-(\tilde{\textbf{\text{p}}})$, $v=q_4^-(\tilde{\textbf{\text{p}}}_{\text{max}}) -q_4^-(\tilde{\textbf{\text{p}}})$.
	Then, \textcolor{black}{the problem \textbf{P3}} will be transformed into 
	\setlength{\abovedisplayskip}{3pt}
    \setlength{\belowdisplayskip}{\abovedisplayskip}
    \setlength{\abovedisplayshortskip}{0pt}
    \setlength{\belowdisplayshortskip}{3pt}
	\begin{subequations}
		\begin{flalign}
			\textbf{P4 }&
			\max_{\substack{\boldsymbol{\rho},\tilde{\textbf{p}},\boldsymbol{\delta},\lambda,\\ t, w, l, u, v}} \enskip q_0^+(\boldsymbol{\rho}) +t    \nonumber \\
			\text{\textbf{s.t.}} \quad 
			& \text{ N1: } 0 \leq t+q_0^-(\boldsymbol{\rho},\lambda) \leq q_0^-(\boldsymbol{\rho}_{\text{max}},\Lambda), \nonumber \\
			&\text{ N2: } 0 \leq t \leq q_0^-(\boldsymbol{\rho}_{\text{max}},\Lambda)-q_0^-(\textbf{0},0), \nonumber\\
			\tilde{\text{C}}\text{1: } &
			\begin{cases}
				\text{Co-N1: }q^+_1(\tilde{\textbf{p}},\boldsymbol{\delta},\lambda)+w \geq q^-_1(\tilde{\textbf{p}}_{\text{max}},\boldsymbol{\delta}_{\text{max}}), \nonumber\\ 
				\text{N3: }0 \leq w \leq q^-_1(\tilde{\textbf{p}}_{\text{max}},\boldsymbol{\delta}_{\text{max}}) -q^-_1(\textbf{0},\textbf{0}),  \nonumber\\
				\text{N4: }0 \leq w+q^-_1(\tilde{\textbf{p}},\boldsymbol{\delta}) \leq q^-_1(\tilde{\textbf{p}}_{\text{max}},\boldsymbol{\delta}_{\text{max}}), 
			\end{cases}
			\nonumber \\
			\text{C1: }&
			\begin{cases}
				\text{Co-N2: }q^+_2(\tilde{\textbf{p}},\boldsymbol{\delta}) +l \geq q^-_2(\tilde{\textbf{p}}_{\text{max}},\boldsymbol{\delta}_{\text{max}}), \nonumber\\
				\text{N5: }0 \leq l \leq q^-_2(\tilde{\textbf{p}}_{\text{max}},\boldsymbol{\delta}_{\text{max}})-q^-_2(\textbf{0},\textbf{0}), \nonumber\\
				\text{N6: }0 \leq l+q^-_2(\tilde{\textbf{p}},\boldsymbol{\delta}) \leq q^-_2(\tilde{\textbf{p}}_{\text{max}},\boldsymbol{\delta}_{\text{max}}),
			\end{cases}
			\nonumber \\
			\tilde{\text{C}}\text{2: }&
			\begin{cases}
				\text{Co-N3: }q_3^+(\lambda)+u\geq q_3^-(\tilde{\textbf{\text{p}}}_{\text{max}}),\nonumber \\
				\text{N7: }0 \leq u \leq q_3^-(\tilde{\textbf{\text{p}}}_{\text{max}})-q_3^-(\textbf{0}), \nonumber\\
				\text{N8: }0 \leq u+q_3^-(\tilde{\textbf{\text{p}}}) \leq q_3^-(\tilde{\textbf{\text{p}}}_{\text{max}}),
			\end{cases}
			\nonumber\\
			\tilde{\text{C}}\text{6: }&
			\begin{cases}
				\text{Co-N4: }q_4^+(\tilde{\textbf{\text{p}}},\boldsymbol{\rho})+v\geq q_4^-(\tilde{\textbf{\text{p}}}_{\text{max}}),\nonumber\\
				\text{N9: }0 \leq v \leq q_4^-(\tilde{\textbf{\text{p}}}_{\text{max}}) - q_4^-(\textbf{0}),\nonumber\\
				\text{N10: }0 \leq v + q_4^-(\tilde{\textbf{\text{p}}}) \leq q_4^-(\tilde{\textbf{\text{p}}}_{\text{max}}),
			\end{cases}
			\nonumber\\
			&\text{C2--C5, }\tilde{\text{C}}\text{7,} \nonumber
		\end{flalign}    
	\end{subequations}
 	\textcolor{black}{where N1--N10 and C2--C5 and $\tilde{\text{C}}$7 are the constraints that build normal set denoted by $\boldsymbol{\Xi}$ and Co-N1--Co-N4 indicate the constraints constructing a co-normal set shown as $\boldsymbol{\Xi}_c$. Therefore, \textbf{P4} is a monotonic optimization in standard canonical form \cite{Tuy2000}. Henceforth, the optimal solution of \textbf{P4} lies on the boundary of the feasible set which is defined as $\boldsymbol{\Xi}\cap\boldsymbol{\Xi}_c$. A well-known method to solve MO problems is polyblock algorithm \cite{Tuy2000,Tuy2005,Zappone2017} in which searches the upper boundary vertex set (edges of the feasible region). The main advantage of employing polyblock algorithm is that the feasible search set is reduced to look up in the boundaries, whereas in the other methods, usually the whole feasible region is exhaustively searched which is intractable and not practical in most cases.} 
 	
 	\textcolor{black}{The computational complexity of the polyblock algorithm is severely based on the form of the functions providing number of variables and the normal, co-normal sets.~Note that, the size of the vertex set can grow exponentially over iterations, however, some of the vertices are not needed in the computation, and therefore can be safely discarded that accelerate the algorithm speed.
 	Assuming the offered algorithm includes following steps starting from a hyper-rectangle that encloses $\boldsymbol{\Xi}\cap\boldsymbol{\Xi}_c$: In the first step, we obtain the best vertex which its projection belongs to the normal set. Next, we obtain the projection of selected vertex on the normal set by using bisection algorithm. The new vertex set is found based on the projection of the vertex on the normal set upperboundary. Then, the improper vertexes that do not satisfy the co-normal constraints are removed. The algorithm continues until a convergence criteria and the best candidate vertex point is reported. While assuming that the dimensions of the optimization problem is $M_1$, the number of iterations in overall polyblock algorithm to converge is $M_2$ and the number of iteration in bisection algorithm for the projection of each vertex is $M_3$.  Then, the complexity order is expressed as $\mathcal{O}\left(M_2(M_2\times M_1+M_3)\right)$.}
	
	\vspace{-3mm}
	\section{Numerical Results}
	In what follows, we evaluate the proposed optimizations by using the transformed monotonic problem, \textbf{P4}. Table \ref{table2} shows the considered chosen values for the parameters of the network.  \textcolor{black}{To avoid numerical issues during simulations and for the sake of simplicity, we assume that $\delta_{n,k}^r=\delta_{n+1,k}^l$, meaning that the percentage of overlapping between left-hand-side of channel $n+1$ is the same as right-hand-side overlapping at channel $n$.  In this way, linear expressions w.r.t. $\delta_{n,k}^r$ and $\delta_{n,k}^l$ will be achieved in the interference terms defined in \eqref{IntraInt} which yields $(\sqrt{\delta_{n,k}^r}+\sqrt{\delta_{n+1,k}^l})^2=4\delta_{n,k}^r$.}
	\begin{table}[t]
		\caption{Simulation Parameters.}
		\centering
		\textcolor{black}{\begin{tabular}{ l l}
			\hline
			Parameter & Default value   \\ \hline
			AP coverage diameter & 200 m \\
			Number of APs ($K$) & 2   \\
			Number of UEs in each AP & 6 \\
			Number of subbands ($N$) & 4 \\
			Subband bandwidth & 180 kHz \\ 
			Number of UEs in a subband & 2 \\
			Maximum data transmit power $P^{\text{max}}_m$, $\forall m$  & 200 mW  \\ 
			Circuit power consumption  & 30 mW \\ 
			Noise spectral density & $-174$ dBm/Hz \\	
			Minimum data rate requirement ${R}^{\text{QoS}}_m $, $\forall m$   & 0.1 bps/Hz \\ 
			Path loss model ($d$: distance)  & $34.53+38\log_{10}(d)$ [dB] \\
			Receiver noise figure (NF)  & 3  dB  \\ \hline
		\end{tabular}}
		\label{table2}
	\end{table}
	Poisson point process (PPP) is leveraged to generate the users' location in 400$\times$400 m$^2$ area where \textcolor{black}{one AP is located at $(100,100)$ m and the second AP is placed at $(300, 100)$ m}. To have benchmarks for comparisons, we consider three scenarios, optimizing $\delta_{n,k}^r$ and $\delta_{n,k}^l$ for $\forall n \in \mathscr{N}, \forall k \in \mathscr{K}$ where we name it as \textcolor{black}{POD}, the next scenario is optimizing
	$\delta_{n,k}^r=\delta$ and $\delta_{n,k}^l=\delta$ for $\forall n \in \mathscr{N}, \forall k \in \mathscr{K}$ where it is named as \textcolor{black}{NPOD} and the scenario where $\delta_{n,k}^r=\delta_{n,k}^l=0$ which is NOMA-OFDM. \textcolor{black}{The SE$=\frac{\text{SR}}{W}$ curves for different multiple access methods are shown in Fig. \ref{fig:3}. Besides, the sum power values are shown for different schemes. We can see that the power consumption for different multiple access methods is roughly the same except for OFDMA. Furthermore, the importance of the sum power objective reduces in case of increasing $\omega$, and hence, the power consumption increases as well. Moreover, when we optimize the percentage of overlapping among channels individually in POD scheme, higher amount of rate is achieved compared with the traditional NOMA-OFDM case due to the better utilization of the spectrum. Furthermore, we observe that the SE is an increasing function with respect to $\omega$ as the resource allocation emphasizes more on SE maximization.}
	\begin{figure}[t]
		\centering
		\includegraphics[trim = 8.75cm 8.75cm 8.75cm 8.75cm,scale=0.6]{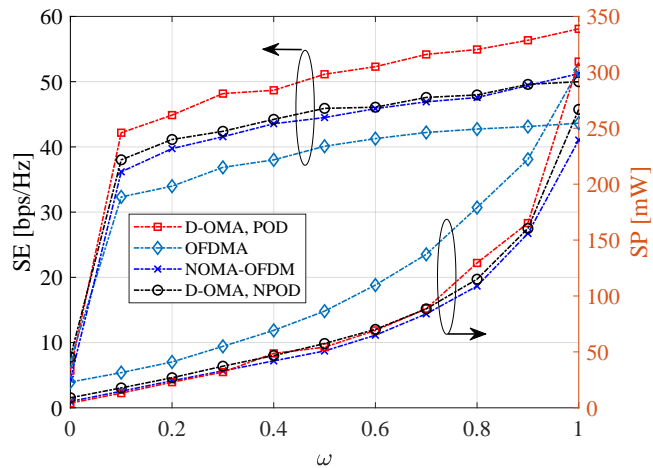}
		\caption{\textcolor{black}{Total SE and SP for different multiple access schemes.}}
		\label{fig:3}
	\end{figure}
	
	\textcolor{black}{It is observed that the D-OMA POD scheme outperforms other multiple access methods in Fig. \ref{fig:5}, where the EE is plotted as a function of SE. Also, it is inferred that the D-OMA POD scheme has higher SE for a fixed EE value, as a result it outperforms other scenarios. The major contribution of using the D-OMA POD scheme is that a higher amount of rate can be achieved without expanding the current available bandwidth. Since the D-OMA POD scheme achieves higher SE without increasing bandwidth, therefore, it would be interesting for Telecommunication operators. Because an operator leverages a small amount of bandwidth in the frequency spectrum to work with and service their users.}
	
	\begin{figure}[t]
		\centering
		\includegraphics[trim = 8.75cm 8.75cm 8.75cm 8.75cm,scale=0.6]{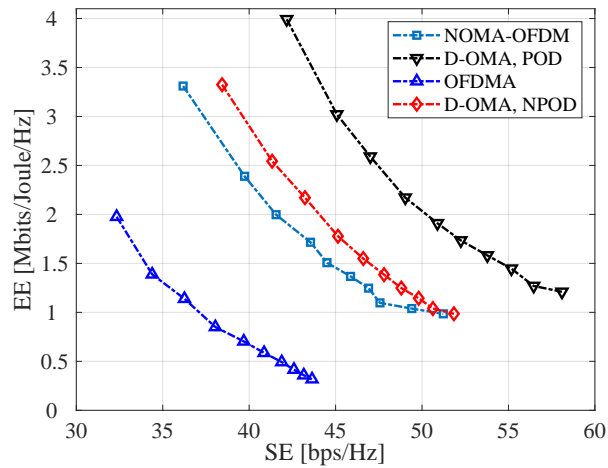}
		\caption{Total EE versus SE.}
		\label{fig:5}
	\end{figure}

	\textcolor{black}{Finally, Fig. \ref{fig:6} illustrates the performance of the D-OMA versus  NOMA-OFDM in terms of the number of UEs in the network. The weight factor is assumed to be $\omega=0.4$ and the number of UEs in each subband is assumed to be $L_{n,k}=10, \forall n, k$. It can be observed that in both multiple access methods, the network total spectral efficiency increases until it is saturated due to the increasing interference signals generated from having more UEs. Furthermore, it is inferred that the D-OMA POD method outperforms NOMA-OFDM in terms of total SE. This is because the overlapping ratios are optimized in D-OMA POD method with fixed available bandwidth so as to maximize the total SE of the network which highlights the significance of D-OMA.}
	\begin{figure}[t]
		\centering
		\includegraphics[trim = 8.75cm 8.75cm 8.75cm 8.75cm,scale=0.6]{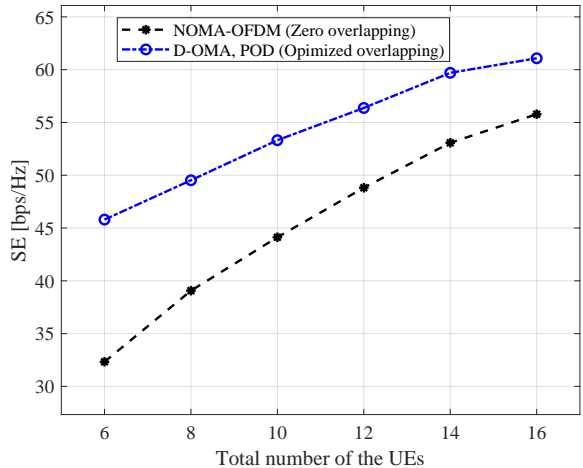}
		\caption{\textcolor{black}{Total network SE versus total number of the UEs.}}
		\label{fig:6}
	\end{figure}
	
	\vspace{-5mm}
	\section{Conclusion}
	In this study, we examined the performance of a D-OMA enabled network in the UL by proposing a multi-objective optimization framework. The the achievable rate expression in the uplink of a multi-AP network is identified in terms of the partial ICI which is controlled by the overlapping ratios between subbands. The network sum-rate and transmit power consumption of the UEs are considered as the objective functions where the variables are the subband associations and transmit power of the UEs as well as the overlapping ratios. To solve the proposed optimization problem, the binary decision variables are relaxed and regulation term is added to the objective function. Next, a method named as Tchebycheff is leveraged to transform the multi-objective optimization to a single-objective problem with the same constraints. Then, the problem is reformulated into a monotonic optimization framework by exploring the hidden monotonicity of the objective and constraints. The numerical results show that the novel D-OMA method outperforms other traditional multiple access methods such as OFDMA, NOMA-OFDM, and OMA. \textcolor{black}{As a future topic, the D-OMA method can be also elaborated for imperfect SIC case which is a practical scenario.}
	\vspace{-4.5mm}
	\bibliographystyle{IEEEtran}
	\bibliography{main.bib} 

\begin{thebibliography}{10}
\providecommand{\url}[1]{#1}
\csname url@samestyle\endcsname
\providecommand{\newblock}{\relax}
\providecommand{\bibinfo}[2]{#2}
\providecommand{\BIBentrySTDinterwordspacing}{\spaceskip=0pt\relax}
\providecommand{\BIBentryALTinterwordstretchfactor}{4}
\providecommand{\BIBentryALTinterwordspacing}{\spaceskip=\fontdimen2\font plus
\BIBentryALTinterwordstretchfactor\fontdimen3\font minus
  \fontdimen4\font\relax}
\providecommand{\BIBforeignlanguage}[2]{{%
\expandafter\ifx\csname l@#1\endcsname\relax
\typeout{** WARNING: IEEEtran.bst: No hyphenation pattern has been}%
\typeout{** loaded for the language `#1'. Using the pattern for}%
\typeout{** the default language instead.}%
\else
\language=\csname l@#1\endcsname
\fi
#2}}
\providecommand{\BIBdecl}{\relax}
\BIBdecl

\bibitem{Al-Eryani2019}
Y.~Al-Eryani and E.~Hossain, ``{The D-OMA method for massive multiple access in
  6G: Performance, security, and challenges},'' \emph{IEEE Veh. Technol. Mag.},
  vol.~14, no.~3, pp. 92--99, Sep. 2019.

\bibitem{David2018}
K.~David and H.~Berndt, ``{6G vision and requirements: Is there any need for
  beyond 5G?}'' \emph{IEEE Veh. Technol. Mag.}, vol.~13, no.~3, pp. 72--80,
  Sep. 2018.

\bibitem{al2019delta}
Y.~Al-Eryani and E.~Hossain, ``Delta-{OMA} ({D-OMA}): A new method for massive
  multiple access in 6{G},'' \emph{arXiv preprint arXiv:1901.07100}, 2019.

\bibitem{ali2016dynamic}
M.~S. Ali, H.~Tabassum, and E.~Hossain, ``Dynamic user clustering and power
  allocation for uplink and downlink non-orthogonal multiple access ({NOMA})
  systems,'' \emph{IEEE access}, vol.~4, pp. 6325--6343, 2016.

\bibitem{Zeng2019}
M.~Zeng, A.~Yadav \emph{et~al.}, ``{Energy-efficient joint user-RB association
  and power allocation for uplink hybrid NOMA-OMA},'' \emph{IEEE Internet
  Things J.}, vol.~6, no.~3, pp. 5119--5131, Jun 2019.

\bibitem{Shi2019}
J.~Shi, W.~Yu \emph{et~al.}, ``{Energy efficient resource allocation in hybrid
  non-orthogonal multiple access systems},'' \emph{IEEE Trans. Commun.},
  vol.~67, no.~5, pp. 3496--3511, May 2019.

\bibitem{Sun2017}
Y.~Sun, D.~W.~K. Ng \emph{et~al.}, ``Optimal joint power and subcarrier
  allocation for full-duplex multicarrier non-orthogonal multiple access
  systems,'' \emph{IEEE Trans. Commun.}, vol.~65, no.~3, pp. 1077--1091, 2017.

\bibitem{Wei2020}
Z.~Wei, L.~Yang \emph{et~al.}, ``On the performance gain of {NOMA} over {OMA}
  in uplink communication systems,'' \emph{IEEE Trans. Commun.}, vol.~68,
  no.~1, pp. 536--568, 2020.

\bibitem{Kim2019}
B.~Kim, Y.~Park, and D.~Hong, ``{Partial non-orthogonal multiple access
  (P-NOMA)},'' \emph{IEEE Wireless Commun. Lett.}, vol.~8, no.~5, pp.
  1377--1380, Oct. 2019.

\bibitem{Ali2020}
K.~S. {Ali}, E.~{Hossain}, and M.~J. {Hossain}, ``Partial non-orthogonal
  multiple access ({NOMA}) in downlink poisson networks,'' \emph{IEEE Trans.
  Wireless Commun.}, vol.~19, no.~11, pp. 7637--7652, 2020.

\bibitem{Tabassum2016}
\BIBentryALTinterwordspacing
H.~Tabassum, M.~S. Ali \emph{et~al.}, ``{Non-orthogonal multiple access
  ({NOMA}) in cellular uplink and downlink: challenges and enabling
  techniques},'' Aug. 2016. [Online]. Available:
  \url{http://arxiv.org/abs/1608.05783}
\BIBentrySTDinterwordspacing

\bibitem{Zarandi2020}
S.~{Zarandi}, A.~{Khalili} \emph{et~al.}, ``Multi-objective energy efficient
  resource allocation and user association for in-band full duplex
  small-cells,'' \emph{IEEE Trans. Green Commun. Netw.}, vol.~4, no.~4, pp.
  1048--1060, 2020.

\bibitem{Ata_TWC}
A.~Khalili, M.~Robat~Mili \emph{et~al.}, ``Antenna selection strategy for
  energy efficiency maximization in uplink {OFDMA} networks: A multi-objective
  approach,'' \emph{IEEE Trans. Wirel. Commun.}, vol.~19, no.~1, pp. 595--609,
  Jan. 2020.

\bibitem{Marler2004}
R.~T. Marler and J.~S. Arora, ``{Survey of multi-objective optimization methods
  for engineering},'' \emph{Struct. Multidiscip. Optim.}, vol.~26, no.~6, pp.
  369--395, Apr. 2004.

\bibitem{Tuy2000}
H.~Tuy, ``{Monotonic optimization: Problems and solution approaches},''
  \emph{SIAM J. Optim.}, vol.~11, no.~2, pp. 464--494, Jul. 2000.

\bibitem{Tuy2005}
H.~Tuy, F.~Al-Khayyal, and P.~T. Thach, \emph{Monotonic optimization: branch
  and cut methods}.\hskip 1em plus 0.5em minus 0.4em\relax Boston, MA: Springer
  US, 2005, pp. 39--78.

\bibitem{Zappone2017}
A.~Zappone, E.~Bj{\"{o}}rnson \emph{et~al.}, ``{Globally optimal
  energy-efficient power control and receiver design in wireless networks},''
  \emph{IEEE Trans. Signal Process.}, vol.~65, no.~11, pp. 2844--2859, Jun.
  2017.

\end{thebibliography}
	
\end{document}